\documentclass[aps,floatfix,amsmath,twocolumn,notitlepage,nofootinbib,longbibliography,superscriptaddress,pra]{revtex4-2}

\usepackage[ruled]{algorithm2e}
\SetKwFor{For}{for (}{) $\lbrace$}{$\rbrace$}
\usepackage{amssymb}
\usepackage{graphicx}
\usepackage{graphics}
\usepackage{amsmath}
\usepackage{amsthm}
\usepackage{color}
\usepackage{dsfont}
\usepackage{wrapfig}

\usepackage{longtable}

\usepackage{mathtools}
\usepackage[colorlinks=true, linkcolor=blue, citecolor=red]{hyperref}
\usepackage{verbatim}
\usepackage{xcolor}

\usepackage{multirow}

\usepackage{wrapfig}
\usepackage{array}

\usepackage{multirow}

\newcolumntype{C}[1]{>{\centering\let\newline\\\arraybackslash\hspace{0pt}}m{#1}}

\newcommand{\p}{\mathrm{p}}
\newcommand{\pr}{\mathrm{PR}}

\usepackage{color}


\def\>{\rangle}
\def\<{\langle}

\def\id{\mathsf{id}}

\newcommand{\ketbra}[2]{|#1\rangle\!\langle#2|}

\renewcommand{\geq}{\geqslant}
\renewcommand{\leq}{\leqslant}

\newcommand{\B}{\mathbf{B}}
\newcommand{\C}{\mathbf{C}}

\newtheorem*{theorem*}{Theorem}

\newtheorem{lemma}{Lemma}
\newtheorem*{lemma*}{Lemma}

\newtheorem{definition}{Definition}[section]
\newtheorem*{definition*}{Definition}

\theoremstyle{definition}

\theoremstyle{remark}

\theoremstyle{definition}

\newtheorem*{axiom*}{Axiom}

\newcommand{\xtens}{\mathbin{\mathop{\otimes}\limits_{\max}}}
\newcommand{\ntens}{\mathbin{\mathop{\otimes}\limits_{\min}}}

\newcommand{\tr}{{\rm Tr}}

\newcommand\independent{\protect\mathpalette{\protect\independenT}{\perp}}
\def\independenT#1#2{\mathrel{\rlap{$#1#2$}\mkern2mu{#1#2}}}

\newcommand{\ve}[1]{{\left\vert\kern-0.25ex\left\vert\kern-0.25ex\left\vert #1 
    \right\vert\kern-0.25ex\right\vert\kern-0.25ex\right\vert}}

\newcommand{\bra}[1]{\langle #1|}
\newcommand{\ket}[1]{|#1\rangle}



\newcommand{\1}{\mathds{1}}

\definecolor{cool_green}{rgb}{0.0, 0.5, 0.0}

\begin{document}

\title{Achieving Maximal Causal Indefiniteness in a Maximally Nonlocal Theory}
\author{Kuntal Sengupta}
\email{kuntal.sengupta.in@gmail.com}
\thanks{A preliminary version of this work appears in the
author's PhD thesis under submission at the University of York.}

\affiliation{Department of Mathematics, University of York, Heslington, York, YO10 5DD, United Kingdom}

\begin{abstract} 

Quantum theory allows for the superposition of causal orders between operations, i.e., for an indefinite causal order; an implication of the principle of quantum superposition. Since a higher theory might also admit this feature, an understanding of superposition and indefinite causal order in a generalised probabilistic framework is needed. We present a possible notion of superposition for such a framework and show that in maximal theories, respecting non-signalling relations, single system state-spaces do not admit superposition; however, composite systems do. Additionally, we show that superposition does not imply entanglement. Next, we provide a concrete example of a maximally Bell-nonlocal theory, which not only admits the presented notion of superposition, but also allows for post-quantum violations of theory-independent inequalities that certify indefinite causal order; even up to an algebraic bound. These findings might point towards potential connections between a theory's ability to admit indefinite causal order,  Bell-nonlocal correlations and the structure of its state spaces. 
\end{abstract}
\maketitle

\section{Introduction}

General relativity(GR) and quantum theory(QT) have been shown to be successful at describing cosmic and atomic physics. A long standing quest is to obtain a higher theory that describes physics at all scales, in a way that its descriptions of cosmic and atomic physics are equivalent to that of GR and QT respectively. In order to investigate for such a theory, it might be useful to work in a framework in which features of both the theories can be expressed meaningfully. From an operational perspective, since QT is probabilistic, it is reasonable to assume that this higher theory is probabilistic as well. 

A following minimal requirement is to be able to describe operations and causal orderings amongst operations in every way permissible in GR and QT. If an operation causally precedes another, we say that there exists a definite causal order relating the two. Here, we are interested in scenarios where there is a lack of definite causal order. It turns out that in both GR and QT such scenarios are possible. On one hand, in GR, if two operations occur in regions that are space-like separated, then there is no definite causal ordering between them. On the other, in QT, the ordering between two operations can be in quantum superposition~\cite{PhysRevA.88.022318}, a feature known as \textit{indefinite causal order(ICO)}. In a probabilistic theory, the lack of causal definiteness of events arising from GR can be modelled by assuming that operations performed in space-like separated regions commute. However, that of superposition of operations remains unknown.  In this paper, we try to address this gap in the framework of generalised probabilistic theories(GPTs)~\cite{PhysRevA.75.032304}.


Fundamental to causal superposition of operations is the notion of superposition. In QT, this notion is attributed to the fact that there exists a representation of pure states in which every pure state can be expressed as a linear combination of other pure states. In an arbitrary probabilistic theory, such a representation need not exist. Therefore, one needs an operational understanding of superposition that can be used to check whether a probabilistic theory admits superposition or not. In this work, we present a candidate definition for superposition which when applied to quantum theory is equivalent to the standard notion of quantum superposition and shows that no classical probability theory can admit superposition. Additionally, we found that although composite systems of maximal theories respecting no-superluminal signalling~\cite{pr} display superposition, their single systems do not.  In addition, we manage to show that the presence of superposition in a theory does not imply the presence of entanglement.

A typical example of indefinite causal order in quantum theory is attributed to a process called the \textit{quantum switch}~\cite{PhysRevA.88.022318}.  That the quantum switch gives rise to indefinite causal order has been tested in a device-independent fashion~\cite{dourdent2023,vanderLugt2023}. In~\cite{vanderLugt2023}, the authors have introduced inequalities, violations of which certify indefinite causal order. Our primary result is an example is a post-quantum probabilistic theory that displays maximally nonlocal correlations, admits superposition and  allows for an indefinite causal order between operations.  We have shown that the inequalities mentioned above can be violated by our toy theory. Such violations are by amounts larger than achievable in QT; even up to an algebraic bound. 

This paper is arranged as follows: Section~\ref{Section::Prelim} contains preliminaries reviewing the quantum switch, the inequalities mentioned in the previous paragraph, a quantum strategy to violate one of those inequalities and a very brief introduction to the framework of GPTs.  In Section~\ref{Sec::Superposition}, we motivate and introduce an operational definition of superposition for probabilistic theories and show how two theories with the identical single system state spaces, differ in their ability to admit superposition. In Section~\ref{Section::HexSquare}, we introduce a foil theory that admits superposition and in Section~\ref{Section::MainResults} show how this theory violates the inequalities reviewed in Section~\ref{Section::Prelim}. Section~\ref{Section::Discussions} contains discussions on the various results.

\section{Preliminaries}
\label{Section::Prelim}
\subsection{Quantum Switch and DRF Inequalities}

\subsubsection{Quantum Switch}

The quantum switch~\cite{PhysRevA.88.022318} is a process in which the causal order between two operations, $\mathcal{O}_{\mathcal{A}_1}$ and  $\mathcal{O}_{\mathcal{A}_2}$, is controlled by a quantum system. Let us assume that this system is a qubit, and $\mathcal{O}_{\mathcal{A}_1}$ precedes  $\mathcal{O}_{\mathcal{A}_2}$ when it is in the state $\ketbra{0}{0}$ and $\mathcal{O}_{\mathcal{A}_2}$ precedes  $\mathcal{O}_{\mathcal{A}_1}$ when in state $\ketbra{1}{1}$. In particular,  entanglement between the state of the control system and the causal orders in which the operations $\mathcal{O}_{\mathcal{A}_1}$ and  $\mathcal{O}_{\mathcal{A}_2}$ are performed is established. In the case where the operations  $\mathcal{O}_{\mathcal{A}_1}$ and  $\mathcal{O}_{\mathcal{A}_2}$ represent unitaries $U_1$ and $U_2$, the action of the quantum switch on them can be defined as 
\begin{equation}
    \left(U_1,U_2\right) \mapsto \ketbra{0}{0}^{\C} \otimes  U_2U_1 + \ketbra{1}{1}^{\C} \otimes  U_1U_2,
\end{equation}
where $\C$ represents the control qubit system~\cite{Dong2023quantumswitchis}. When the control qubit is in a superposition of the states $\ketbra{0}{0}$ and $\ketbra{1}{1}$, the ordering of the operations become superposed, exhibiting an indefinite causal order. That an ICO is displayed by the quantum switch, can be certified in a theory independent way by witnessing violations of certain inequalities~\cite{vanderLugt2023}, that we proceed to review next.

\subsubsection{DRF Inequalities}
\label{Subsection::DRF}

\begin{figure}
    \centering
    \includegraphics[width=0.45\textwidth]{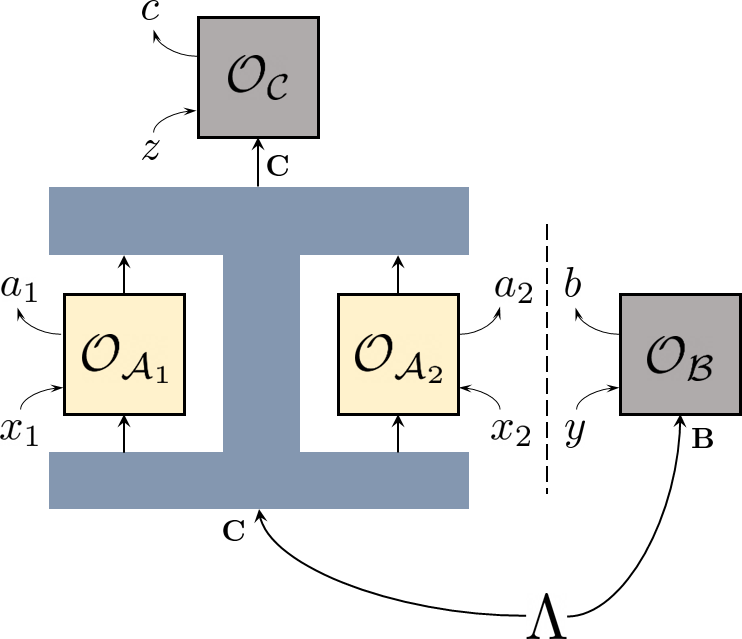}
    \caption{Abstraction of the setup considered in~\cite{vanderLugt2023} to derive the DRF inequalities in Table~\ref{Table:DRF_Faces}. Blue region region represents a process that implements the operations $\mathcal{O}_{\mathcal{A}_1}$ and $\mathcal{O}_{\mathcal{A}_2}$ (yellow) in the orders $\mathcal{O}_{\mathcal{A}_1} \prec \mathcal{O}_{\mathcal{A}_2}$ when the value of subsystem $\C$ of $\Lambda$ is 0 and $\mathcal{O}_{\mathcal{A}_2} \prec \mathcal{O}_{\mathcal{A}_1}$ when the value of  subsystem $\C$ of $\Lambda$ is 1. $\mathcal{O}_{\mathcal{C}}$ is in the causal future of $\{\mathcal{O}_{\mathcal{A}_i}\}_i$ and $\mathcal{O}_{\mathcal{B}}$ is causally disjoint from $\{\mathcal{O}_{\mathcal{A}_i}\}_i$ and $\mathcal{O}_{\mathcal{C}}$. } 
    \label{Fig::DISwitch}
\end{figure}

Given two random variables $A$ and $X$, $A \independent X $  denotes that $A$ is uncorrelated from $X$, i.e., $\p(A|X)=\p(A)$.  For a set, $\Gamma$, of random variables, a causal order, $(\Gamma, \dashrightarrow)$, is a pre-order relation $\dashrightarrow$ on $\Gamma$. For $A,X \in \Gamma$, if $X \dashrightarrow A$, $A$ is said to be in the \textit{causal future} of $X$ and $X$ is said to be a \textit{potential cause} for $A$. $X$ is said to be \textit{free}, if it is uncorrelated with every random variable outside its causal future. To every operation $\mathcal{O}$, we associate a pair of random variables $(M,N)$, with $M \dashrightarrow N$. If $(M',N')$ is associated to the operation $\mathcal{O}'$, the relation $\mathcal{O} \prec \mathcal{O}'$ denotes that the random variables $M'$ and $N'$ are in the causal futures of the random variables $M$ and $N$. The relation $\mathcal{O} \bigtimes \mathcal{O}'$  denotes that neither $\mathcal{O} \prec \mathcal{O}'$ nor  $\mathcal{O}' \prec \mathcal{O}$ hold.

The scenario described in~\cite{vanderLugt2023} considers four operations, $\mathcal{O}_{\mathcal{A}_1},\mathcal{O}_{\mathcal{A}_2}, \mathcal{O}_{\mathcal{B}}$ and $\mathcal{O}_{\mathcal{C}}$, with associated random variable pairs $(X_1,A_1), (X_2,A_2),(Y,B)$ and $(Z,C)$ respectively. The underlying causal relations between these operations can be stated in three parts:

\begin{enumerate}
    \item \textbf{Definite Causal Order (D):}  There is a random variable $\Lambda$ taking values $\lambda \in\{0,1\}$.  When $\lambda =0$,   $\mathcal{O}_{\mathcal{A}_1} \prec \mathcal{O}_{\mathcal{A}_2}$ and when $\lambda = 1$,  $\mathcal{O}_{\mathcal{A}_2} \prec \mathcal{O}_{\mathcal{A}_1}$. 
    \item \textbf{Relativistic Causality (R):} $\mathcal{O}_{\mathcal{A}_i}  \prec \mathcal{O}_{\mathcal{C}}$,  $\mathcal{O}_{\mathcal{A}_{i}} \bigtimes \ \mathcal{O}_{\mathcal{B}}$, for any $i \in \{1,2\}$, and $\mathcal{O}_{\mathcal{C}} \bigtimes \ \mathcal{O}_{\mathcal{B}}$.
    \item \textbf{Freedom of Choice (F):} $X_1,X_2,Y$ and $Z$ are free.
\end{enumerate}

Figure~\ref{Fig::DISwitch} depicts one way to implement these causal relations with the justifications that the operations take place in four closed labs  $\mathcal{A}_1,\mathcal{A}_2,\mathcal{B}$, and $\mathcal{C}$ respectively, the variables $X_1,X_2,Y$ and $Z$ correspond to the input settings available to the operations, and $A_1,A_2,B$ and $C$  correspond to their respective outputs. Under these assumptions, the set of conditional probability distributions, $\p(A_1,A_2,C,B|X_1,X_2,Z,Y)$, can be characterised by a convex polytope, some face-defining inequalities of which are enlisted in Table~\ref{Table:DRF_Faces}. These inequalities hold even if the random variable $\Lambda$ is correlated with the outcome represented by $B$, as depicted in Fig.~\ref{Fig::DISwitch}. Within any theory, admitting these assumptions, whenever the order of operations between $\mathcal{O}_{\mathcal{A}_1}$ and $\mathcal{O}_{\mathcal{A}_2}$ is determined by $\Lambda$, every conditional distribution satisfies inequalities in~\ref{Table:DRF_Faces}. Therefore, these inequalities present theory-independent constraints on conditional probabilities that respect the above assumptions; we will call them \textit{DRF inequalities}.

\begin{table*}[t]
    \centering
    \begin{tabular}{|c|}
    \hline
        \vbox{\begin{equation}
            p(b=0,a_2=x_1|y=0 )+p\left(b=1,a_1=x_2|y=0\right) + p\left(b \oplus c=yz|x_1=x_2=0\right) \leq \frac{7}{4}
      \label{Eq::DRFInequality}  \end{equation}  }  \\     \hline   
      \vbox{\begin{equation}
          p(b=0,a_2=x_1|y=0) + p(b=1,a_1=x_2|y=0) + p(b \oplus c = \boldsymbol{x_2 y}|\boldsymbol{x_1= 0} ) \leq \frac{7}{4}
    \label{Eq::DRFInequality1}  \end{equation}}  \\     \hline 
      \vbox{\begin{equation}
          p(b=0,a_2=x_1|\boldsymbol{x_2}y=\boldsymbol{0}0) + p(b=1,a_1=x_2|\boldsymbol{x_1}y=\boldsymbol{0}0) + p(b \oplus c = x_2 y|x_1= 0 ) \leq \frac{7}{4}
     \label{Eq::DRFInequality2} \end{equation}}  \\     \hline 
      \vbox{\begin{equation}
      \begin{split}
          p(b=0,a_2=x_1|x_2y=00) + p(b=&1,a_1=x_2|x_1y=00) + p(b \oplus c = x_2 y|x_1= 0 )\\ + &\boldsymbol{p(a_2=1,c \oplus1=b=y|x_1x_2=00)} \leq \frac{7}{4} \\
     \end{split} \label{Eq::DRFInequality3} \end{equation}}  \\     \hline 
      \vbox{\begin{equation}
      \begin{split}
          \frac{1}{2}\big[p(a_1=0|x_1x_2=10) + p(&a_2=0|x_1x_2=01) -  p(a_1a_2=00|x_1x_2=11)\big] \\
          &p((x_2a_1+(x_2 \oplus 1)c)\oplus b=x_2y | x_1=x_2) \leq \frac{7}{4} \\
          \end{split}
     \label{Eq::DRFInequality4} \end{equation}}  \\     \hline 
    \end{tabular}
    \caption{Some face-defining hyperplanes of the DRF polytope presented in~\cite{vanderLugt2023}. Therein, authors showed violations of each of these inequalities by quantum switch correlations. Boldface highlights how an inequality is different from the one preceding it and $\oplus$ denotes modulo 2 operation.}
    \label{Table:DRF_Faces}
\end{table*}


\noindent

When the order of the operations $\mathcal{O}_{\mathcal{A}_1}$ and $\mathcal{O}_{\mathcal{A}_2}$ is controlled by one subsystem of a bipartite maximally entangled state, while the other subsystem is distributed to lab $\mathcal{B}$, it is possible to violate the DRF inequalities~\cite{vanderLugt2023}. In particular, the DRF assumptions do not simultaneously hold. If one further assumes that \textbf{R} and \textbf{F} hold, a theory-independent violation of definite causal order (\textbf{D}) is implied. 

Below, we summarize the quantum strategy from~\cite{vanderLugt2023} that leads to a violation of Inequality~\eqref{Eq::DRFInequality}. In Section~\ref{Section::HexSquare}, we will use the quantum operations involved in this strategy to develop our foil theory.


\subsection{Quantum Strategy in the Switch}
\label{Subsection::QStrat}

Let us denote by 
$$
\sigma_{\hat{X}} \coloneqq \left(
\begin{array}{cc}
 0 & 1 \\
 1 & 0 \\
\end{array}
\right) , \sigma_{\hat{Y}} \coloneqq \left(
\begin{array}{cc}
 0 & -i \\
 i & 0 \\
\end{array}
\right) \text{ and } \sigma_{\hat{Z}} \coloneqq  \left(
\begin{array}{cc}
 1 & 0 \\
 0 & -1 \\
\end{array}
\right)
$$
\noindent
 the three Pauli matrices corresponding to some fixed orthogonal directions $\hat{X},\hat{Y}$ and $\hat{Z}$ respectively and let $\ket{\phi_+}\coloneqq (\ket{00}+\ket{11})/\sqrt{2}$. Operations $\mathcal{O}_{\mathcal{A}_i}$ are measure and prepare channels and operations $\mathcal{O}_{\mathcal{B}}$ and $\mathcal{O}_{\mathcal{C}}$ are measurements. One subsystem ($\C$) of a maximally entangled state  $\Phi_+^{\C\B} \coloneqq \ketbra{\phi_+}{\phi_+}^{\C\B}$ is used as the control while the other subsystem ($\B$) is distributed to lab $\mathcal{B}$. A target system, $\mathbf{T}$, is initially prepared in $\ketbra{0}{0}$, on which the operations $\mathcal{O}_{\mathcal{A}_i}$ are performed in the causal order set by the control.  In lab $\mathcal{A}_i$, the incoming system, $\mathbf T$, is measured in the $\{\ketbra{0}{0},\ketbra{1}{1}\}$ basis. The outcome is labelled $a_i$, and the state $\ketbra{x_i}{x_i}$ is prepared and sent off. In lab $\mathcal{C}$, the output control system is measured in the basis generated by the rank 1 projectors of $(\sigma_{\hat{Z}} +  \sigma_{\hat{X}})/\sqrt{2}$ when $z=0$ and of $(\sigma_{\hat{Z}} -  \sigma_{\hat{X}})/\sqrt{2}$ when $z=1$; let us denote these projectors as $\{\ketbra{\psi_{c|z}}{\psi_{c|z}}\}_{c,z}$. The outcome of this measurement is labelled as $c$. In lab $\mathcal{B}$, subsystem $\B$ of $\Phi_+^{\C\B}$ is measured in the basis defined by the rank 1 projectors of $\sigma_{\hat{Z}}$ when $y=0$ and $\sigma_{\hat{X}}$ when $y=1$; let us denote these projectors as $\{\ket{\phi_{b|y}}\bra{\phi_{b|y}}\}_{b,y}$. The outcome of this measurement is labelled as $b$.

 The elements of resultant probability distribution $\p(A_1,A_2,B,C|X_1,X_2,Y,Z)$ can be written as:
\begin{equation}
\label{Eq::ProbFormula}
    \p \left(a_1,a_2,b,c|x_1,x_2,y,z\right) =  \tr\left[ K \left(\Phi_+^{\B\C} \otimes \ket{0}\bra{0}^{\mathbf T}\right) K^{\dagger} \right],
\end{equation} 

\noindent
where $K \coloneqq \bra{\psi_{c|z}}^\C\bra{\phi_{b|y}}^\B(\ket{0}\bra{0}^{\C}\otimes \ket{x_2}\left<a_2|x_1\right>\bra{a_1}^{\mathbf T} +  \ket{1}\bra{1}^{\C} \otimes \ket{x_1}\left<a_1|x_2\right>\bra{a_2}^{\mathbf T})$ and $\tr[\cdot]$ denotes matrix trace.

Note, that when $y=0$, the probability of getting either $b=0$ or $b=1$ is $1/2$. When $b=0$, the post-selected control system is in the state $\ketbra{0}{0}^\C$ which implies $a_2=x_1$. Similarly, when $b=1$, the post-selected control system is in the state $\ketbra{1}{1}^{\C}$ which implies $a_1=x_2$. Therefore, the first two terms of Inequality~\eqref{Eq::DRFInequality} adds up to 1. Next, when $x_1=x_2=0$, the right hand side of Equation~\eqref{Eq::ProbFormula} reduces to
$$
\tr\left[\ketbra{\psi_{c|z}}{\psi_{c|z}}^\C \otimes \ketbra{\phi_{b|y}}{\phi_{b|y}}^\B \left(\Phi_+^{\C\B}  \otimes \ketbra{0}{0}^{\mathbf T}  \right) \right]_{\delta_{a_1=a_2=0}},
$$
\noindent
i.e., the switch process acts as an identity map on the control and target systems. Therefore, labs $\mathcal{B}$ and $\mathcal{C}$ can perform a Bell-test on systems $\B$ and $\C$ to obtain $p\left(b \oplus c=yz|x_1=x_2=0\right)=(1+1/\sqrt{2})/2$. This results in a violation of Inequality~\eqref{Eq::DRFInequality}, since the sum of all the conditional probabilities appearing in it is $1+(1+1/\sqrt{2})/2 > 7/4$~\cite{vanderLugt2023}.

In~\cite{vanderLugt2023}, the maximum quantum violations of Inequalities~\eqref{Eq::DRFInequality1},~\eqref{Eq::DRFInequality2} and~\eqref{Eq::DRFInequality3} were found to be $1.8274 > 7/4$, and for Inequality~\eqref{Eq::DRFInequality4} to be $1+(1+1/\sqrt{2})/2 > 7/4$. We refer the reader to~\cite{vanderLugt2023} for the quantum strategies leading to these values.

\subsection{Generalised Probabilistic Theories (GPTs)}

Generalised Probabilistic Theories are a framework to study theories in which experiments can be decomposed into preparations, operations and measurements of systems~\cite{segal1947,Davies1970,Ludwig1968,Dahn1968,Stolz1971,Mielnik1974,Giles1970,Gudder1973,PhysRevA.75.032304}. In a GPT, a \textit{state space} $\mathcal{S}$, i.e., the set of states available to a system, is mathematically modelled by a compact convex subset of a real vector space $\mathbb{V}$. A positive linear functional in $\mathbb{V}^*$ on $\mathcal{S}$ that maps elements in $\mathcal{S}$ to real numbers in the interval $[0,1]$, is called an \textit{effect}. The \textit{unit effect}, $u$, maps every state to 1 while the \textit{zero effect}, $v_0$, maps every state to 0. A set of effects, $\mathcal{E}$, containing $u$ and $v_0$ is called an \textit{effect space} if for every $e \in \mathcal{E}$, $u-e \in \mathcal{E}$. An effect space $\mathcal{E}$ is a compact convex subset of $\mathbb{V}^*$. The action of an effect $e \in \mathcal{E}$ on a state $s \in \mathcal{S}$ is defined as $e(s) \coloneqq \left<e,s\right>$, where $\left<\cdot,\cdot\right>$ is an inner product on $\mathbb{V}$. A measurement on a state $s$ is characterised by a set of effects $\{e_i\}_i$, such that $\sum_i e_i = u$. Finally, a state (or effect) is said to be \textit{extreme} or \textit{extremal} if it is outside the convex span of the remaining states (or effects).

The state space of a composite system is formed by the composition of the state spaces of its subsystems. Two examples are the \textit{minimal} and \textit{maximal} tensor product compositions. Given two state spaces $\mathcal{S}_1 \subset V_1 $ and $\mathcal{S}_2 \subset V_2$, their minimal tensor product composition is given by
\begin{equation}
    \mathcal{S}_1 \ntens \mathcal{S}_2 \coloneqq \Big\{s_1 \otimes s_2 \ |\ s_1 \in \mathcal{S}_1, s_2 \in \mathcal{S}_2\Big\},
\end{equation}
where $\otimes$ is the tensor product. Their maximal composition is given by 
\begin{equation}
\begin{split}
   & \mathcal{S}_1  \xtens \mathcal{S}_2 \coloneqq \\
   & \Big\{s \in V_1 \otimes V_2\ |\ e_1 \otimes e_2 (s) \in [0,1] \ \forall \  e_1 \in \mathcal{E}_1, e_2 \in \mathcal{E}_2\Big\},
\end{split}     
\end{equation}
where $\mathcal{E}_1$ and $\mathcal{E}_2$ are the largest effect spaces compatible with the state spaces $\mathcal{S}_1$ and $\mathcal{S}_2$ respectively. In these compositions, every multipartite state has the property that when a local measurement is performed on any of its subsystems, the probabilities associated to the various outcomes are independent of operations performed on the remaining subsystems. We say that such correlations are \textit{non-signalling}. The minimal and maximal compositions identify the smallest and largest composite state spaces respecting the non-signalling condition mentioned above. A popular example of a theory constructed in the maximal composition is \textit{box-world (BW)}. Here, joint states display correlations, called \textit{PR correlations}, that maximally violate Bell inequalities~\cite{pr,PhysRevA.75.032304}.

Qubit quantum theory can be phrased as a GPT by identifying its state space with the set of $2 \times 2$ unit-trace positive semi-definite matrices, a compact convex subset of the real vector space, $\mathbb{H}(\mathbb{C}^2)$, of $2 \times 2$ Hermitian matrices. An  effect, $\Pi$, is a $2 \times 2$ positive semi-definite matrix, such that $\mathds{1}-\Pi$ is also positive semi-definite, where $\mathds{1}$ is the $2 \times 2$ identity matrix. The effect space is the largest set of such matrices with $\mathds{1}$ as the unit effect, and forms a compact convex subset of  $\mathbb{H}(\mathbb{C}^2)$. The action of the effects on the states is given by the Hilbert-Schmidt inner product. Since every pure qubit state $\Psi$ is a rank 1 matrix, it can be represented by its eigen-vector $\ket{\psi} \in \mathbb{C}^2$. Qudit quantum theory can be understood in a similar way by identifying its state space with the compact convex subset, of $d \times d$ unit-trace positive semi-definite matrices, of the real vector space of $d \times d$ Hermitian matrices, and so on. Every pure qudit state can also be represented by a vector in $\mathbb{C}^d$. In this theory, pure states are deemed extremal.

\section{Superposition in GPTs}
\label{Sec::Superposition}

Textbook introduction to quantum superposition  is attributed to the fact that certain linear combinations of pure states, when represented as vectors in $\mathbb{C}^d$, is also a pure state. More precisely, for every pure state $\ket{\phi}$, there exists a pair of states $\{\ket{\psi_1},\ket{\psi_2}\}$, a unique linear combination of  which reproduces $\ket{\phi}$, i.e., 
\begin{equation}
    \alpha \ket{\psi_1} + \beta \ket{\psi_2} = \ket{\phi},
\end{equation}
where $\alpha$ and $\beta$ are complex numbers, such that $|\alpha|^2 + |\beta|^2=1$. This notion of superposition cannot be generalised to arbitrary GPTs since it a priory depends on pure states being represented by vectors in $\mathbb{C}^d$. Indeed, even for quantum theory, this notion of superposition is solely dependent on its Hilbert-space formalism. In a different formalism, for instance if all states were represented by probability tables constructed from tomographic data, a clear understanding of quantum superposition is lacking. To build an understanding, that does not depend on the mathematical framework in which the underlying theory is phrased, one might want to take an operational approach and describe it in terms of the input-output statistics obtained upon performing suitable measurements.  

Two attempts to address this were presented in~\cite{PhysRevLett.128.160402} and~\cite{PhysRevA.101.042118}. In~\cite{PhysRevLett.128.160402}, superposition has been treated on equal footing with non-classicality. In particular, any theory with a non-simplicial~\footnote{The state space of any classical probability theory can be described as a simplex.} state space has been said to admit superposition. In~\cite{PhysRevA.101.042118}, superposition has only been explored for theories whose state spaces have infinitely many pure states. In this paper, we take a slightly different approach by first looking at the statistical features of experimental outcomes that are traditionally associated to the presence of superposition in quantum theory and then characterise a minimal condition for a theory to display similar statistical behaviour.

For our first example, let us consider $\ket{\phi}=\ket{0},\ket{\psi_1}=\ket{+}\coloneqq (\ket{0}+\ket{1})/\sqrt{2}$  and $\ket{\psi_2}=\ket{-}\coloneqq (\ket{0}-\ket{1})/\sqrt{2}$, with  which one has 
\begin{equation}
    \frac{1}{\sqrt{2}}(\ket{+}+\ket{-})=\ket{0}.
\end{equation}
\noindent
Looking at these states as elements of the state space, one notices that for the state $\ketbra{0}{0}$, there exists a unique measurement $\{\ketbra{0}{0},\ketbra{1}{1}\}$, such that the outcome of the measurement is deterministic, i.e., $\tr[\ketbra{0}{0}.\ketbra{0}{0}]=1$ and $\tr[\ketbra{1}{1}.\ketbra{0}{0}]=0$. Similarly, for the states $\ketbra{+}{+}$ and $\ketbra{-}{-}$, there exists a measurement, $\{\ketbra{+}{+}, \ketbra{-}{-} \}$, such that outcome of a measurement on these two states is deterministic. However, when the measurement $\{\ketbra{+}{+}, \ketbra{-}{-} \}$ is performed on $\ketbra{0}{0}$, or $\{\ketbra{0}{0},\ketbra{1}{1}\}$ on $\ketbra{+}{+}/\ketbra{-}{-}$, the probability of each outcome is 1/2. Here, the probability of non-deterministic outcomes being 1/2 is not crucial. To see why,  consider $\ket{\phi}=\ket{0},\ket{\psi_1}=\ket{\psi}\coloneqq  \sqrt{2/3} \ket{0}+\sqrt{1/3} \ket{1}$  and $\ket{\psi_2}=\ket{-}$, with which one has
\begin{equation}
    \ket{0}=\frac{\sqrt{3}}{1+\sqrt{2}}\ket{\psi}+\frac{\sqrt{2}}{1+\sqrt{2}}\ket{-}.
\end{equation}
\noindent
For the state $\ketbra{\psi}{\psi}$ there exists a unique measurement $\{\ketbra{\psi}{\psi},\mathds{1}-\ketbra{\psi}{\psi}\}$ which is deterministic with respect to the state $\ketbra{\psi}{\psi}$ but non-deterministic with respect to both  $\ketbra{0}{0}$ and $\ketbra{-}{-}$. In addition,  $\tr[\ketbra{0}{0}.\ketbra{\psi}{\psi}]=2/3 \neq \tr[\ketbra{0}{0}.\ketbra{-}{-}]$.

Translating this into the framework of GPTs, we first note, that to check whether a binary outcome measurement is deterministic with respect to a state or not, it is sufficient to calculate the probability corresponding to one of the outcomes. Secondly, although in qudit quantum theory for every pair of pure states $\ketbra{\phi'_1}{\phi'_1}$ and $\ketbra{\phi'_2}{\phi'_2}$ there exists a unique effect $\ketbra{\phi'_1}{\phi'_1}$  such that $\tr[\ketbra{\phi'_1}{\phi'_1}].\ketbra{\phi'_1}{\phi'_1}=1$ and $\tr[\ketbra{\phi'_1}{\phi'_1}.\ketbra{\phi'_2}{\phi'_2}]$ is either 0 or in the interval $(0,1)$, this may not be true for an arbitrary GPT; there might be multiple extremal effects with this property. 

The discussion above gives us our requirement for a theory to have superposition: the state space of the theory must have three distinct extremal states $s$ and $\{r_1,r_2\}$, and three extremal effects $e_s$ and $\{f_{r_1},f_{r_2}\}$, such that $\left<e_s,s\right>=1$ but $\left<e_s,r_{1/2}\right> \in (0,1)$, and $\left<f_{r_1},r_1\right>=1$ and $\left<f_{r_2},r_2\right>=1$ but $\left<f_{r_{1/2}},s\right> \in (0,1)$. Since we are only considering extremal states, this requirement already distinguishes a superposition from a classical mixture; for any classical mixture of $\alpha r_1 + (1-\alpha) r_2$, $\left<e_s, \alpha r_1 + (1-\alpha) r_2\right> \in (0,1)$, where $\alpha \in [0,1]$. We now formalise our observations into an operational definition for superposition.

\begin{definition}
    Let $\mathcal{S}$ and $\mathcal{E}$ be a state and effect space pair of a GPT and denote by $\mathrm{Extreme}[\mathcal{S}]$ and $\mathrm{Extreme}[\mathcal{E}]$ the set of extreme states in $\mathcal{S}$ and extreme effects in $\mathcal{E}$ respectively. The GPT is said to admit superposition if there exists three distinct states $s,r_1,r_2 \in \mathrm{Extreme}[\mathcal{S}]$ and three effects $e_{s},f_{r_1},f_{r_2} \in \mathrm{Extreme}[\mathcal{E}]$, such that $\left<e_{s},s\right>=1$, $\left<e_{s},r_{j}\right> \in (0,1)$, $\left<f_{r_{j}},r_{j}\right>=1$ and $\left<f_{r_{j}},s\right> \in (0,1)$, for all $j\in\{1,2\}$.   
    \label{Definition::Superposition}
\end{definition}
 \noindent
Note, that the effects $e_{s}$ and $f_{r_j}$ can neither be the zero or the unit effect. Additionally, since the inner product between any extreme state and any extreme effect in a simplicial theory is either 0 or 1, no classical theory can admit superposition. Finally, for qudit quantum theory, say with $d=3$, although one might be able to represent a pure state as a linear combination of three other pure states, a linear combination of two pure states is still a well defined pure state. Thus, Definition~\ref{Definition::Superposition} captures the minimal necessary requirements for a theory to admit superposition.

In a more restrictive notion of superposition, defined with respect to a basis, one might require $\{f_{r_1},f_{r_2}\}$ to form a measurement, i.e., $f_{r_1}+f_{r_2}=u$. In fact, the examples in this paper are constructed to meet this requirement.

It is possible to satisfy the conditions stated in Definition~\ref{Definition::Superposition} by mixed states lying on the boundary of the state space. One might urge that superposition should then be defined for mixed states as well. Indeed, there is no a priory operational reason as to why only extremal states must possess superposition. Therefore, we do not impose a direct restriction on the type of states that might be described as superposition of other states. However, in this work, superposition can be associated to mixed states subject to the theory admitting superposition, in accordance to Definition~\ref{Definition::Superposition}.

Next, let us look at the \textit{gbit} state space consisting of the following set of extremal states:
\begin{equation}
  \Biggl\{  \left(
\begin{array}{c}
 1 \\
 0 \\ \hline
 1 \\
 0 \\
\end{array}
\right), \left(
\begin{array}{c}
 1 \\
 0 \\ \hline
 0 \\
 1 \\
\end{array}
\right), \left(
\begin{array}{c}
 0 \\
 1 \\ \hline
 1 \\
 0 \\
\end{array}
\right),\left(
\begin{array}{c}
 0 \\
 1 \\ \hline
 0 \\
 1 \\
\end{array}
\right)\Biggl\},
\end{equation}
in the notation $\p(A|X) \coloneqq \left(p(0|0) p(1|0)\ |\ p(0|1) p(1|1)\right)^T$, where $X$ and $A$ represent the random variables associated to the choices and outcomes of fiducial measurements~\footnote{A set of measurements is said to be fiducial if they allow for state tomography. It is not necessarily unique. \label{foot}}.  The maximal set of extremal effects for this state space is:

\begin{equation}
   \Biggl\{ \left(
\begin{array}{c}
 1 \\
 0 \\ \hline
 0 \\
 0 \\
\end{array}
\right),\left(
\begin{array}{c}
 0 \\
 1 \\ \hline
 0 \\
 0 \\
\end{array}
\right), \left(
\begin{array}{c}
 0 \\
 0 \\ \hline
 1 \\
 0 \\
\end{array}
\right), \left(
\begin{array}{c}
 0 \\
 0 \\ \hline
 0 \\
 1 \\
\end{array}
\right), \left(
\begin{array}{c}
 0 \\
 0 \\ \hline
 0 \\
 0 \\
\end{array}
\right), \left(
\begin{array}{c}
 1 \\
 1 \\ \hline
 0 \\
 0 \\
\end{array}
\right) \Biggl\}.
\end{equation}
\noindent
Upon constructing a table of inner products between extreme states and effects one can check that all the inner products are either 0 or 1. Therefore, the condition in Definition~\ref{Definition::Superposition} cannot be met, implying that this single system state space does not admit superposition. It is then natural to ask whether there exists a GPT which admits superposition whilst having its single system state spaces being described by gbits. Two examples of theories with gbit single system state spaces are Generalised Local Theory (GLT), formed by taking the min-tensor product of the gbit state spaces, and BW, formed by taking the max-tensor product of the gbit state spaces~\cite{PhysRevA.75.032304}. In a GLT, analogous to the gbit state space, all the inner products between the extremal states and effects are either 0 or 1 and thus GLT is an example of a non-classical theory without superposition. BW, on the other hand,  admits superposition; consider the following collection of states : 
$$
\pr_1 \coloneqq \frac{1}{2}\left(
\begin{array}{cc|cc}
 1 & 0 & 1 & 0 \\
 0 & 1 & 0 & 1 \\  \hline
 1 & 0 & 0 & 1 \\
 0 & 1 & 1 & 0 \\
\end{array}
\right),
\pr_2 \coloneqq \frac{1}{2}\left(
\begin{array}{cc|cc}
 0 & 1 & 0 & 1 \\
 1 & 0 & 1 & 0 \\ \hline
 0 & 1 & 1 & 0 \\
 1 & 0 & 0 & 1 \\
\end{array}
\right),
$$
$$
\pr_1' \coloneqq \frac{1}{2}\left(
\begin{array}{cc|cc}
 1 & 0 & 0 & 1 \\
 0 & 1 & 1 & 0 \\ \hline
 1 & 0 & 1 & 0 \\
 0 & 1 & 0 & 1 \\
\end{array}
\right),
\pr_2' \coloneqq \frac{1}{2}\left(
\begin{array}{cc|cc}
 0 & 1 & 1 & 0 \\
 1 & 0 & 0 & 1 \\ \hline
 0 & 1 & 0 & 1 \\
 1 & 0 & 1 & 0 \\
\end{array}
\right),
$$
\noindent
in the notation $\p(A,B|X,Y) \coloneq $
$$
\left(
\begin{array}{cc|cc}
 p(00|00) & p(01|00) & p(00|01) & p(01|01) \\
 p(10|00) & p(11|00) & p(10|01) & p(11|01) \\ \hline
 p(00|10) & p(01|10) & p(00|11) & p(01|11) \\
 p(10|10) & p(11|10) & p(10|11) & p(11|11) \\
\end{array}
\right)
$$
\noindent
where $X,A$ and $Y,B$ represent the random variables associated to the choices and outcomes of fiducial measurements on the first and second systems respectively, and the collection of effects:
$$
e_1 \coloneqq \left(
\begin{array}{cc|cc}
 0 & 0 & 0 & 0 \\
 0 & 0 & 0 & 0 \\ \hline
 1 & 0 & 0 & 0 \\
 0 & 0 & 1 & 0 \\
\end{array}
\right),
e_2 \coloneqq \left(
\begin{array}{cc|cc}
 0 & 0 & 0 & 0 \\
 0 & 0 & 0 & 0 \\ \hline
 0 & 1 & 0 & 0 \\
 0 & 0 & 0 & 1 \\
\end{array}
\right),
$$
$$
e_1' \coloneqq \left(
\begin{array}{cc|cc}
 0 & 0 & 0 & 1 \\
 0 & 1 & 0 & 0 \\ \hline
 0 & 0 & 0 & 0 \\
 0 & 0 & 0 & 0 \\
\end{array}
\right),
e_2' \coloneqq \left(
\begin{array}{cc|cc}
 0 & 0 & 1 & 0 \\
 1 & 0 & 0 & 0 \\ \hline
 0 & 0 & 0 & 0 \\
 0 & 0 & 0 & 0 \\
\end{array}
\right),
$$
from the bi-partite state and effect spaces. Taking the inner product by element-wise multiplication, we get $\left<e_i,\pr_i\right>=1$, $\left<e_j',\pr_j'\right>=1$, $\left<e_i,\pr_j'\right>=1/2$ and $\left<e_j',\pr_i\right>=1/2$ for any $i,j \in  \{1,2\}$.  Although Definition~\ref{Definition::Superposition} concerns three states and three effects, an additional state and effect in this example helps  observing that  $1/2 \pr_1 + 1/2 \pr_2 = 1/2 \pr_1' + 1/2 \pr_2' $; drawing resemblance to the quantum ensembles of states, $\{\ket{0},\ket{1}\}$ and $\{\ket{+},\ket{-}\}$, discussed above. In the following two sections, we provide an example of a GPT that admits superposition and show how, within it, all the DRF inequalities (see Table~\ref{Table:DRF_Faces}) can be violated by amounts larger than achievable using quantum theory. Finally, note that $e_1+e_2=e_1'+e_2'=u$.

\section{Hex-Square Theory}
\label{Section::HexSquare}

The quantum strategy summarised in Section~\ref{Subsection::QStrat} allows for the maximal violation of Inequality~\eqref{Eq::DRFInequality1}, achievable within quantum theory. The algebraic bound of this inequality is still larger than this value. The part of Inequality~\eqref{Eq::DRFInequality} whose algebraic limit cannot be achieved within quantum theory, is $p\left(b \oplus c=yz|x_1=x_2=0\right)$. The condition $b \oplus c=yz$ can be seen as a nonlocal (CHSH) game, which can be won up to its algebraic maximum by a PR box~\cite{pr,PhysRevA.75.032304} (BW states from Section~\ref{Sec::Superposition}). Therefore, a potential route to outperform quantum theory is to construct a theory where a PR box is a valid bipartite state.  One of its subsystems can then be used to control the causal order between $\mathcal{O}_{\mathcal{A}_1}$ and  $\mathcal{O}_{\mathcal{A}_2}$,  while the other is shared with lab $\mathcal{B}$. One way to represent a PR box is with the state  
\begin{equation}
    \Phi_{\rm PR} \coloneqq \frac{1+\sqrt{2}}{2} \Phi_{+} + \frac{1-\sqrt{2}}{2} \Phi_{-},
    \label{Equation::PR}
\end{equation}
where $\Phi_{-} \coloneqq \ket{\phi_-}\bra{\phi_-}$ with $\ket{\phi_-} \coloneqq (\ket{00}-\ket{11})/\sqrt{2}$. When $\Phi_{\pr}$ is shared between two parties, such that one measures the observables  $\{(\sigma_{\hat X} + \sigma_{\hat Y})/\sqrt{2},(\sigma_{\hat{X}} - \sigma_{\hat Y})/\sqrt{2}\}$ and the other measures $\{\sigma_{\hat X},\sigma_{\hat Y}\}$,  PR correlations are generated~\cite{PhysRevLett.104.140404}.  Note, that $\Phi_{\rm PR}$ is not a quantum state; it has negative eigenvalues.

We present a foil theory in which $\Phi_{\pr}$ is a  valid bipartite state, and whose single system effect spaces contain the effects used in the strategy presented in Section~\ref{Subsection::QStrat}, along with the ones needed to realise PR correlations.  More precisely, the set of extremal effects in the effect space, $\mathcal{E}_{\C}$, of the control subsystem contains the rank 1 projectors of $(\sigma_{\hat X} \pm \sigma_{\hat Z}), (\sigma_{\hat X} \pm \sigma_{\hat Y})$ and $\sigma_{\hat Z}$~\footnote{There is no direct measurement of $\sigma_{\hat{Z}}$. However, in the calculation of probabilities using the formula in Equation~\eqref{Eq::ProbFormula}, the action of the map $K$ on system $\C$ can be seen as a measure and prepare operation onto the eigen-basis of  $\sigma_{\hat{Z}}$.}; that of the other subsystem,  $\mathcal{E}_{\B}$, contains the rank 1 projectors of $\sigma_{\hat X},\sigma_{\hat Y}$ and $\sigma_{\hat Z}$. The remaining extremal effects in both these effect spaces are the unit effect $\mathds{1}$ and the zero effect. The corresponding state spaces $\mathcal{S}_{\C} \subset \mathbb{H}(\mathbb{C}^2)$ and $\mathcal{S}_{\B} \subset \mathbb{H}(\mathbb{C}^2)$ are taken to be the largest set of states compatible with the respective effect spaces. These are explicitly presented below.

Firstly, since the Pauli matrices span $\mathbb{H}(\mathbb{C}^2)$, any unit trace~\footnote{Since $\mathds{1}$ is the unit effect, we require the states to be unit trace.} Hermitian matrix $\varrho$ can be expressed as 
\begin{equation}
    \varrho =\frac{\mathds{1}+ r_x \sigma_{\hat{X}} + r_y \sigma_{\hat{Y}} + r_z \sigma_{\hat{Z}}}{2},
    \label{Eq::PauliState}
\end{equation}
where $r_x,r_y,r_z \in \mathbb{R}$. Second, given an effect space $\mathcal{E} \in \mathbb{V}^*$, the facet-defining inequalities~\footnote{For a polytope $\mathcal{X} \in \mathbb{R}^d$, a set of inequalities $\{\underline{\mathbf{\alpha}}_i^T.\underline{\mathbf{x}} \leq \beta_i\}_i$, with $\underline{\mathbf{\alpha}}_i \in \mathbb{R}^d$ and $\beta_i \in \mathbb{R}$, is called facet-defining if they can be reduced to the set of facet-inequalities.} for the largest compatible state space $\mathcal{S}_{\mathcal{E}} \subset \mathbb{V}$ is given as:  
\begin{equation}
    \mathrm{FD}\left[\mathcal{S}\right] = \bigg\{  \left<e,\mathbf{x} \right> \geq 0 \ |\ e \in \mathrm{Extreme}[\mathcal{E}], \mathbf{x} \in \mathbb{V}  \bigg\},
\end{equation}
where  $\mathrm{Extreme}[\mathcal{E}] $ denotes the extremal effects in $\mathcal{E}$. Since the sets of  extremal effects in both $\mathcal{E}_{\C}$ and $\mathcal{E}_{B}$ are finite, the state spaces for both these systems can be characterised by finite lists of facet-defining inequalities. Finally, calculating the vertices given the list of all facet-defining inequalities of a polytope is known as \textit{vertex enumeration}. For our problem, we enlist the intersection points of the hyperplanes generated by the facet-defining inequalities in variables $(r_x,r_y,r_z)$ and then check which of these intersection points satisfy all such inequalities for the respective systems. In terms of elements in $\mathbb{H}(\mathbb{C}^2)$, we found that the extremal states for systems $\mathbf{C}$ and $\mathbf{B}$ to be:

\begin{equation}
\label{Eq::ExtremeSC}
\begin{split}
    &\mathrm{Extreme}\left[ \mathcal{S}_{\C} \right] = \\    
   & \bigg\{\frac{\mathds{1} \pm \sqrt{2}  \sigma_{\hat{X}}}{2}, \frac{\mathds{1}  \pm \sqrt{2}\sigma_{\hat{Y}} \pm \sigma_{\hat{Z}}}{2}, \frac{\mathds{1} \pm r\sigma_{\hat{X}} \pm \sigma_{\hat{Y}} \pm \sigma_{\hat{Z}}}{2}  \bigg\} \\
\end{split}  
\end{equation}
\noindent
and
\begin{equation}
    \mathrm{Extreme}\left[ \mathcal{S}_{\B} \right] =  \bigg\{\frac{\mathds{1} \pm \sigma_{\hat{X}} \pm \sigma_{\hat{Y}} \pm \sigma_{\hat{Z}}}{2} \bigg\},
\end{equation}
where $r =\sqrt{2}-1$. The square state space associated with system $\mathbf{B}$ can be derived by noticing that the corresponding facet-defining inequalities reduce to $-1 \leq r_x \leq 1$, $-1 \leq r_y \leq 1$ and $-1 \leq r_z \leq 1$. When the rank 1 projectors of $\sigma_{\hat{X}}$, $\sigma_{\hat{Y}}$ and $\sigma_{\hat{Z}}$ are taken to be the fiducial measurements, an isomorphism between $\mathcal{S}_{\B}$ and the gbit state space with three binary outcome fiducial measurements is established. Therefore, system $\mathbf{B}$ does not admit superposition. However, we find that system $\mathbf{C}$ admits superposition, as formalised in Lemma~\ref{Lemma::HexSup}. An outline of the derivation of the extremal set of states of $\mathcal{S}_{\C}$ can be found in Appendix~\ref{App::DerivationExtendedHexSpace}. Figure~\ref{Fig::StateSpaces} represents slices of state spaces $\mathcal{S}_{\mathbf{C}}$ and $\mathcal{S}_{\mathbf{B}}$ obtained by setting $r_y=0$, with respect to the real quantum state space (the Bloch-disc).   

\begin{figure}
    \centering
    \includegraphics[width=0.4\textwidth]{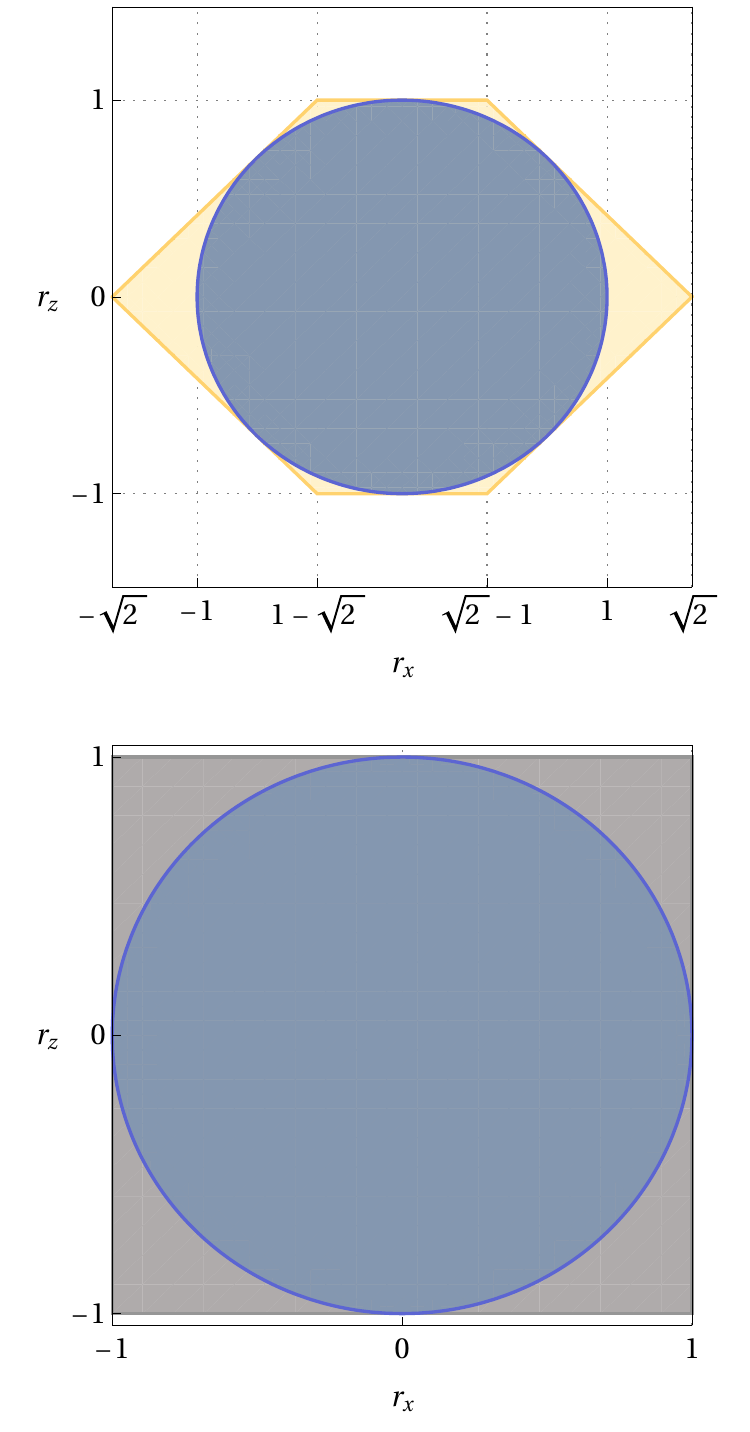}
    \caption{ Slices of state spaces $\mathcal{S}_{\mathbf{C}}$ and $\mathcal{S}_{\mathbf{B}}$ obtained by setting $r_y=0$. \textbf{Top:} Representation of $\mathcal{S}_{\mathbf{C}}$ (yellow) with respect to the quantum set (blue). \textbf{Bottom:} Representation of $\mathcal{S}_{\mathbf{B}}$ (grey) with respect to the quantum set (blue).}
    \label{Fig::StateSpaces}
\end{figure}

\begin{lemma}
    If $\mathcal{S}_{\C}$ and $\mathcal{E}_{\C}$ are a state and effect space pair of a GPT, the GPT admits superposition.
\end{lemma}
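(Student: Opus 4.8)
The plan is to prove the lemma directly, by exhibiting an explicit triple of extremal states and an explicit triple of extremal effects of $(\mathcal{S}_{\C},\mathcal{E}_{\C})$ and checking that they meet the four numerical conditions of Definition~\ref{Definition::Superposition}; no structural argument is needed. The computation is short thanks to the Bloch parametrisation: by Eq.~\eqref{Eq::PauliState} every state is $\varrho=\tfrac12(\1+\vec r\cdot\vec\sigma)$, every extremal effect of $\mathcal{E}_{\C}$ other than $\1$ and the zero effect is a rank-one projector $\tfrac12(\1+\hat n\cdot\vec\sigma)$ with $|\hat n|=1$, and the Hilbert--Schmidt pairing collapses to $\langle e,\varrho\rangle=\tfrac12(1+\vec r\cdot\hat n)$. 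So everything reduces to a handful of Euclidean dot products between the Bloch vectors listed in Eq.~\eqref{Eq::ExtremeSC} and the unit vectors along $(1,0,\pm1)$, $(1,\pm1,0)$, $(0,0,\pm1)$ (and their negatives) that label the extremal effects.

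Concretely, I would take $s=\tfrac12(\1+\sqrt2\,\sigma_{\hat X})$, $r_1=\tfrac12(\1+\sqrt2\,\sigma_{\hat Y}+\sigma_{\hat Z})$, $r_2=\tfrac12(\1+\sqrt2\,\sigma_{\hat Y}-\sigma_{\hat Z})$, all appearing in Eq.~\eqref{Eq::ExtremeSC} and pairwise distinct, together with the effects $e_s=\tfrac12(\1+(\sigma_{\hat X}+\sigma_{\hat Z})/\sqrt2)$ (a rank-one projector of $\sigma_{\hat X}+\sigma_{\hat Z}$) and $f_{r_1}=\tfrac12(\1+\sigma_{\hat Z})$, $f_{r_2}=\tfrac12(\1-\sigma_{\hat Z})$ (the rank-one projectors of $\sigma_{\hat Z}$), none of which is $\1$ or the zero effect. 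Then $\langle e_s,s\rangle=1$ since $(\sqrt2,0,0)\!\cdot\!(1,0,1)/\sqrt2=1$; $\langle e_s,r_1\rangle=\tfrac{2+\sqrt2}{4}$ and $\langle e_s,r_2\rangle=\tfrac{2-\sqrt2}{4}$, both strictly between $0$ and $1$; $\langle f_{r_j},r_j\rangle=1$ because $r_1,r_2$ have $\hat Z$-component $\pm1$; and $\langle f_{r_j},s\rangle=\tfrac12\in(0,1)$ because $s$ has vanishing $\hat Z$-component. This is exactly the data Definition~\ref{Definition::Superposition} requires. As a bonus, $f_{r_1}+f_{r_2}=\1=u$, so the witness is in fact a superposition with respect to a basis, matching the remark in Section~\ref{Sec::Superposition}.

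I do not expect any serious obstacle; the one place that takes a moment of thought rather than being mechanical is the choice of $f_{r_1}$ and $f_{r_2}$. The condition $\langle f_{r_j},s\rangle\in(0,1)$ rules out every extremal effect whose Bloch direction has $\hat X$-component $\pm1/\sqrt2$ (those pair with $s$ to give $0$ or $1$), and among the extremal effects of $\mathcal{E}_{\C}$ this leaves only $\tfrac12(\1\pm\sigma_{\hat Z})$; hence $f_{r_1},f_{r_2}$ are essentially forced, which in turn forces $r_1,r_2$ to have $\hat Z$-component $+1$ and $-1$, and suitable candidates can then just be read off from Eq.~\eqref{Eq::ExtremeSC}. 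After that the verification is pure arithmetic. (Note that Definition~\ref{Definition::Superposition} does not require $s$ to be a mixture of $r_1,r_2$, as in the $\{\ket{0},\ket{1}\}$ versus $\{\ket{+},\ket{-}\}$ example; imposing that would need more care, so I would not ask for it.)
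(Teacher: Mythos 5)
Your proposal is correct and takes essentially the same approach as the paper: exhibiting explicit extremal states and effects from $\mathrm{Extreme}[\mathcal{S}_{\C}]$ and $\mathcal{E}_{\C}$ and verifying the inner-product conditions of Definition~\ref{Definition::Superposition} by Bloch-vector arithmetic. The only difference is the particular witnesses chosen (the paper uses $r_{1,2}=\tfrac{1}{2}(\1\pm(r\sigma_{\hat X}+\sigma_{\hat Y}+\sigma_{\hat Z}))$ and $e_s=\tfrac{1}{2}(\1+(\sigma_{\hat X}-\sigma_{\hat Z})/\sqrt{2})$, while you use the $\tfrac{1}{2}(\1+\sqrt{2}\sigma_{\hat Y}\pm\sigma_{\hat Z})$ states and the $(\sigma_{\hat X}+\sigma_{\hat Z})$ projector), and both choices check out.
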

\label{Lemma::HexSup}
\begin{proof}

Take the collection of states $\{s_1,s_2,r_1,r_2\}$, where
$$
s_1 \coloneqq \frac{\mathds{1}+\sqrt{2}\sigma_{\hat{X}}}{2}, \quad 
s_2 \coloneqq \frac{\mathds{1}-\sqrt{2}\sigma_{\hat{X}}}{2},
$$
$$
r_1 \coloneqq  \frac{\mathds{1}+ r\sigma_{\hat{X}} +\sigma_{\hat{Y}} +\sigma_{\hat{Z}}}{2}\quad 
r_2 \coloneqq \frac{\mathds{1} - r\sigma_{\hat{X}} -\sigma_{\hat{Y}} -\sigma_{\hat{Z}}}{2},
$$
and effects :
$$
f_1 \coloneqq \frac{\mathds{1}-(\sigma_{\hat{Z}} - \sigma_{\hat{X}})/\sqrt{2}}{2}, 
f_2 \coloneqq \frac{\mathds{1}+(\sigma_{\hat{Z}} - \sigma_{\hat{X}})/\sqrt{2}}{2},
$$
$$
f_1' \coloneqq \frac{\mathds{1}+\sigma_{\hat{Z}}}{2}, \quad 
f_2' \coloneqq \frac{\mathds{1}-\sigma_{\hat{Z}}}{2},
$$
One then gets $\left<f_i,s_i\right>=1$, $\left<f_j',s_j'\right>=1$, $\left<f_i,s_j'\right> \in (0,1)$ and $\left<f_j',s_i\right>=1/2$ for any $i,j \in  \{1,2\}$. 
\end{proof}
\noindent
With the same motive as in BW, we considered four states and effects to draw a closer resemblance to QT: $1/2 s_1 + 1/2 s_2 = 1/2 s_1' + 1/2 s_2' = \mathds{1}/2 $. Lastly, note that $f_1+f_2=f_1'+f_2'=\mathds{1}$.

\subsection{Bipartite Systems}

 Any bipartite composition, $\boxtimes$, of state spaces $\mathcal{S}_{\C}$ and $\mathcal{S}_{\B}$ must have the property that all states within it respect appropriate marginalisations: marginalising to system $\mathbf{C}$ results in a valid state in $\mathcal{S}_{\C}$ and that to $\mathbf{B}$ in $\mathcal{S}_{\B}$. In particular, any unit trace $4 \times 4$ Hermitian matrix, $\varsigma$, is a valid state if $\tr[(\mathrm{P} \otimes \mathrm{Q})\varsigma] \in [0,1]$, for every $\mathrm{P} \in \mathcal{E}_{\mathbf{C}}$ and $\mathrm{Q} \in \mathcal{E}_{\mathbf{B}}$. In addition, for every pair of locally prepared states, there must be a state in the bipartite composition; more precisely, $\mathcal{S}_{\C} \ntens \mathcal{S}_{\B} \subseteq \mathcal{S}_{\C} \boxtimes \mathcal{S}_{\B}$. With these properties, different compositions give rise to different theories. That any such theory admits superposition can also be inferred from its bipartite state space. 

 \begin{lemma}
     Let $\mathcal{S}_{\C} \boxtimes \mathcal{S}_{\B} \subset \mathbb{H(C)} \otimes \mathbb{H(C)}$ be a bipartite composition, $\boxtimes$, of state spaces $\mathcal{S}_{\C}$ and $\mathcal{S}_{\B}$, with the associated effect space $\mathcal{E}_{\C\B} \subset  \mathbb{H(C)} \otimes \mathbb{H(C)}$. There exists three distinct extremal states $s,q_1,q_2 \in \mathcal{S}_{\C} \boxtimes \mathcal{S}_{\B}$ and three distinct extremal effects $e_s,f_{q_1},f_{q_2} \in \mathcal{E}_{\C\B}$, such that $\left<e_{s},s\right>=1$, $\left<e_{s},q_{j}\right> \in (0,1)$, $\left<f_{q_{j}},q_{j}\right>=1$ and $\left<f_{q_{j}},s\right> \in (0,1)$, for all $j\in\{1,2\}$.
     \label{Lemma::CompSup}
 \end{lemma}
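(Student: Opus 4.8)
The plan is to reduce Lemma~\ref{Lemma::CompSup} to Lemma~\ref{Lemma::HexSup} by lifting the single-system witnesses for $\mathcal{S}_{\C}$ to the bipartite level via tensoring with a fixed reference state and effect on system $\B$. Concretely, pick any extremal state $t \in \mathcal{S}_{\B}$ together with an effect $g \in \mathcal{E}_{\B}$ with $\left<g,t\right>=1$ — for instance $t = (\mathds{1}+\sigma_{\hat Z})/2$ and $g=(\mathds{1}+\sigma_{\hat Z})/2$ — and set $s \coloneqq s_1 \otimes t$, $q_1 \coloneqq r_1 \otimes t$, $q_2 \coloneqq r_2 \otimes t$, where $s_1, r_1, r_2$ are the three states from the proof of Lemma~\ref{Lemma::HexSup}. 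Since $\mathcal{S}_{\C} \ntens \mathcal{S}_{\B} \subseteq \mathcal{S}_{\C} \boxtimes \mathcal{S}_{\B}$, these product states lie in the bipartite state space; and a product of extreme points is an extreme point of the composite (being extremal in the min-tensor-product cone, it remains extremal in any larger convex set that still contains only valid normalized states, since it is a pure product which cannot be written as a nontrivial convex combination inside the larger set either — this needs a one-line argument using that local effects separate the product states). For the effects, set $e_s \coloneqq f_1 \otimes g$, $f_{q_1} \coloneqq f_1' \otimes g$, $f_{q_2} \coloneqq f_2' \otimes g$, using the effects from Lemma~\ref{Lemma::HexSup}; each of these is a product of a local effect on $\C$ with $g$ on $\B$, hence lies in $\mathcal{E}_{\C\B}$ (it assigns values in $[0,1]$ to every bipartite state by the defining property of the composite effect space, since marginals are valid), and each is extremal because it is a product of extremal local effects.

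The inner products then factorize: $\left<e_s,s\right> = \left<f_1,s_1\right>\left<g,t\right> = 1\cdot 1 = 1$; $\left<e_s,q_j\right> = \left<f_1,r_j\right>\left<g,t\right> = \left<f_1,r_j\right> \in (0,1)$ by Lemma~\ref{Lemma::HexSup}; $\left<f_{q_j},q_j\right> = \left<f_j',r_j\right>\left<g,t\right> = 1$; and $\left<f_{q_j},s\right> = \left<f_j',s_1\right>\left<g,t\right> = 1/2 \in (0,1)$. So all four defining conditions of the superposition witness are inherited directly from the single-system computation. One must also check the three states are distinct and the three effects are distinct, which is immediate since $s_1, r_1, r_2$ are pairwise distinct and $f_1, f_1', f_2'$ are pairwise distinct, and tensoring with the common factors $t$ resp. $g$ preserves distinctness.

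The one genuine subtlety — the main obstacle — is extremality of the lifted states in the \emph{composite} state space $\mathcal{S}_{\C}\boxtimes\mathcal{S}_{\B}$, which is strictly larger than the min-tensor product, so extremality is not automatically inherited. The argument is: suppose $s_1\otimes t = \tfrac12(\varsigma_0+\varsigma_1)$ with $\varsigma_0,\varsigma_1 \in \mathcal{S}_{\C}\boxtimes\mathcal{S}_{\B}$. Apply the local effect $f_1\otimes g$ (or rather a suitable family of local effects that are tomographically complete on each factor): since $\left<f_1\otimes g, s_1\otimes t\right>=1$ and the value lies in $[0,1]$ on each $\varsigma_i$, we get $\left<f_1\otimes g,\varsigma_i\right>=1$ for both; iterating over a fiducial set of local effects on $\C$ and on $\B$ pins the marginals of each $\varsigma_i$ to $s_1$ and $t$ respectively, and then the non-signalling/product structure (together with $s_1, t$ being extremal locally, so their state is a product with a rank-one-type rigidity) forces $\varsigma_i = s_1\otimes t$. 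A cleaner route, which I would actually use, is simply to invoke that a product of extreme points of two compact convex sets is an extreme point of any compact convex set sandwiched between their min-tensor product and their max-tensor product — this is a standard fact (the product $s_1\otimes t$ is exposed by the product functional $f_1\otimes g$ restricted appropriately) and deserves at most a sentence. Everything else is the routine factorization of inner products already displayed above.
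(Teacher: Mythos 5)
Your proof is essentially the paper's proof: tensor the single-system witnesses $s_1,r_1,r_2,f_1,f_1',f_2'$ from Lemma~\ref{Lemma::HexSup} with a fixed reference state--effect pair on $\B$ satisfying $\left<g,t\right>=1$, and let the inner products factorize. One concrete slip: your suggested instance $t=(\mathds{1}+\sigma_{\hat Z})/2$ is \emph{not} an extremal state of $\mathcal{S}_{\B}$ --- the extreme points of $\mathcal{S}_{\B}$ are the cube vertices $(\mathds{1}\pm\sigma_{\hat X}\pm\sigma_{\hat Y}\pm\sigma_{\hat Z})/2$, and $(\mathds{1}+\sigma_{\hat Z})/2$ lies in the interior of a face --- so with that choice $s,q_1,q_2$ would not be extremal in the composite. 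The fix is immediate: take $t=(\mathds{1}+\sigma_{\hat X}+\sigma_{\hat Y}+\sigma_{\hat Z})/2$ with $g=(\mathds{1}+\sigma_{\hat Z})/2$, which still gives $\left<g,t\right>=1$, and everything else goes through unchanged. On the point you flag as the main subtlety --- that extremality of product states and effects survives passage to a composite strictly larger than the min-tensor product --- the paper simply asserts this without argument, so your sketch (pinning the marginals via extremality and then using that a pure marginal forces a product state, by tomographic locality of product effects) is if anything more careful than the published proof; it is the right argument and worth the sentence you give it.
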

\begin{proof}
    First, note that if $s' \in \mathrm{Extreme}[\mathcal{S}_{\B}]$ and $s'' \in \mathrm{Extreme}[\mathcal{S}_{\C}]$, $s'' \otimes s' \in \mathrm{Extreme}[\mathcal{S}_{\C} \boxtimes \mathcal{S}_{\B}]$. Similarly, if  $e' \in \mathrm{Extreme}[\mathcal{E}_{\B}]$ and $e'' \in \mathrm{Extreme}[\mathcal{E}_{\C}]$, $e'' \otimes e' \in \mathrm{Extreme}[\mathcal{E}_{\C\B}]$. Now, pick $s' \in \mathcal{S}_{\B}$ and $e' \in \mathcal{E}_{\B}$ such that $\tr[s'.e']=1$. Further, let $s \coloneqq s_1 \otimes s'$,  $q_1 \coloneqq r_1 \otimes s'$, $q_2 \coloneqq r_2 \otimes s'$, $ e_s \coloneq f_1 \otimes e'$, $f_{q_1} \coloneq f_1' \otimes e'$ and $f_{q_2} \coloneq f_2' \otimes e'$, where $s_1,r_1,r_2,f_1,f_1'$ and $f_2'$ are defined in the proof of Lemma~\ref{Lemma::HexSup}. The conditions in the statement are met with these collection of states and effects.
\end{proof}

Definition~\ref{Definition::Superposition} distinguishes the notion of superposition from entanglement by Lemma~\ref{Lemma::CompSup}. When entanglement is impossible, for instance in $\mathcal{S}_{\C} \ntens \mathcal{S}_{\B}$, superposition is still possible. In particular, superposition does not imply entanglement. However, it is not known to the author whether entanglement implies superposition. 

For our work, we do not specify the bipartite state space explicitly. Upcoming results in Section~\ref{Section::MainResults} hold in any bipartite state space that contains $ \Phi_{\rm PR}$ as a valid state. Any such theory generates all BW correlations. Still, they differ from BW, in that one of the single systems is not described by a gbit state space. We call them \textit{Hex-Square} theories.

\section{Indefinite Causal Order in a Hex-Square Theory}
\label{Section::MainResults}
The main result of this paper is that, within a Hex-Square theory all the DRF inequalities from Table~\ref{Table:DRF_Faces} can be violated by amounts larger than is achievable within quantum theory (see Section~\ref{Subsection::QStrat}). If these amounts are considered indicators of the strength of causal indefiniteness in a theory, a Hex-Square theory can be noted to exhibit post-quantum  causal indefiniteness. 

We found that the maximum achievable value of the expression in Inequality~\eqref{Eq::DRFInequality} is $(14+\sqrt{2})/8$, of that in Inequalities~\eqref{Eq::DRFInequality1},~\eqref{Eq::DRFInequality2} and~\eqref{Eq::DRFInequality3} is $(12+\sqrt{2})/16$, and of that in Inequality~\eqref{Eq::DRFInequality4} is 2. Note, that Inequality~\eqref{Eq::DRFInequality4} can be violated up to its algebraic bound, indicating maximal causal indefiniteness in a Hex-Square theory. In the following, we present the strategies that can be used to obtain these violations.

\subsection{Inequality~\eqref{Eq::DRFInequality} }
\label{Subsection::DRFIneq}

Recall that in the quantum switch the causal order between $\mathcal{O}_{\mathcal{A}_1}$ and $\mathcal{O}_{\mathcal{A}_2}$ is controlled by the states $\ketbra{0}{0}$ and $\ketbra{1}{1}$. Since these states are valid in both the Hex and Square systems, we stick to this convention. Further, recall that these state spaces were constructed by  requiring that the quantum effects used to demonstrate a violation of the DRF Inequality~\eqref{Eq::DRFInequality} are valid effects. Therefore, we will stick to the quantum strategy here as well; with the exception of using a subsystem of $\Phi_{\rm PR}$ instead of $\Phi_+$ for controlling the causal order between the two operations. With this, let us evaluate the values of the probabilities appearing in Inequality~\eqref{Eq::DRFInequality}.

First, when $y=0$ and $b=0$, the post selected sub-normalized state of the control is
\begin{equation}
    \tr_{\B}\left[\left(\id^{\C} \otimes \ket{0}\bra{0}^{\B}\right) \Phi_{\rm PR}^{\C\B}\right] = \frac{1}{2} \ket{0}\bra{0}^{\C}.
\end{equation}
Therefore, the probability of the outcome $b=0$ is 1/2, post-selecting on which the control system is in $\ketbra{0}{0}^{\C}$, implying $a_2=x_1$. Similarly, when $y=0$, with a probability of $1/2$ one gets  $b=1$, with the post-selected state of the control being $\ket{1}\bra{1}^{\C}$, implying $a_1=x_2$.  Hence, the first two terms in Inequality~\eqref{Eq::DRFInequality} add up to 1. For the third term, notice that when $x_1=x_2=0$,

\begin{equation}
\begin{split}
    &\tr\left[ K \left(\Phi_{\rm PR}^{\C\B} \otimes \ket{0}\bra{0}^{\mathbf{T}}\right) K^{\dagger} \right]_{x_1=x_2=0} = \\
    &\frac{1+\sqrt{2}}{2}\tr\left[ K \left(\Phi_{+}^{\C\B} \otimes \ket{0}\bra{0}^{\mathbf{T}}\right) K^{\dagger} \right]_{x_1=x_2=0} + \\
    &\frac{1-\sqrt{2}}{2}\tr\left[ K \left(\Phi_{-}^{\C\B} \otimes \ket{0}\bra{0}^{\mathbf{T}}\right) K^{\dagger} \right]_{x_1=x_2=0} = \\    
    &\tr\left[ \Pi^{\C\B} \otimes \id^{\mathbf{T}} \left(\Phi_{\rm PR}^{\C\B} \otimes \ket{0}\bra{0}^{\mathbf{T}}\right) \right]\delta_{a_1=a_2=0};
    \end{split}  
    \label{Eq::EffIdentity}
\end{equation}
where $\Pi^{\C\B} \coloneqq \ket{\psi_{c|z}}\bra{\psi_{c|z}}^{\C} \otimes \ket{\phi_{b|y}}\bra{\phi_{b|y}}^{\B}$, meaning the operations inside the switch act as an identity on the control and target input systems when $x_1=x_2=0$. Therefore, a Bell-test can be performed in labs $\mathcal{C}$ and $\mathcal{B}$ on $\Phi_{\rm PR}$ to generate the conditional probability distribution
\begin{equation}
    \p(C,B|X,Z)=\left(
\begin{array}{cc|cc}
 \varepsilon_+/8   & \varepsilon_-/8   & 1/2 & 0 \\
 \varepsilon_-/8  & \varepsilon_+/8   & 0 & 1/2 \\ \hline
 \varepsilon_+/8  & \varepsilon_-/8   & 0 & 1/2 \\
 \varepsilon_-/8   & \varepsilon_+/8   & 1/2 & 0 \\
\end{array}
\right)
\end{equation}
where $\varepsilon_\pm = 2 \pm \sqrt{2}$. This distribution has a CHSH score of $(6+\sqrt{2})/8$. The three terms of Inequality~\eqref{Eq::DRFInequality} therefore add up to $(14+\sqrt{2})/8$, which is larger than $1+(1+1/\sqrt{2})/2$, i.e., the maximal violation achievable in quantum theory. Under the assumptions taken in~\cite{vanderLugt2023}, this violation certifies indefinite causal order in a Hex-Square theory. Although a Hex-Square theory can generate all Box-world correlations, with the strategy above we do not see the algebraic maximal violation of Inequality~\eqref{Eq::DRFInequality}. This is because one needs to measure the observables   $(\sigma_{\hat X} \pm \sigma_{\hat Y})/\sqrt{2}$ on system $\C$ and $\{\sigma_{\hat X},\sigma_{\hat Y}\}$ on system $\B$. If either $\sigma_{\hat X}$  or $\sigma_{\hat Y}$ is measured on system $\B$, there are no outcomes, post-selecting on which a definite causal order is achieved between $\mathcal{O}_{\mathcal{A}_1}$ and  $\mathcal{O}_{\mathcal{A}_2}$. Therefore, the sum of the first two terms in Inequality~\eqref{Eq::DRFInequality} cannot be maximised. Alternatively, if the CHSH term is maximised first, an overall lower value will be observed.

\subsection{Inequalities~\eqref{Eq::DRFInequality1},~\eqref{Eq::DRFInequality2} and~\eqref{Eq::DRFInequality3}}
\label{Subsection::DRFIneq1to3}

Inequality~\eqref{Eq::DRFInequality1} is different from~\eqref{Eq::DRFInequality} in that $z$ is replaced by $x_2$. With the strategy from the previous section, the sum of the first two terms is 1. Moreover, when $x_2=y=0$, the probability that $b=0$ and $a_2=x_1$ is 1/2; similarly, when  $x_1=y=0$, the probability of $b=1$ and $a_1=x_2$ is 1/2. Therefore, with the same strategy, the sums of the first two terms in both~\eqref{Eq::DRFInequality2} and~\eqref{Eq::DRFInequality3} are 1. Since the third terms in Inequalities~\eqref{Eq::DRFInequality1}, ~\eqref{Eq::DRFInequality2} and~\eqref{Eq::DRFInequality3} are equal, we need to find a strategy that maximises this probability while keeping the sum of the first two terms 1.

When $x_2=1$, the marginal distribution over $a_1$ and $b$, conditioned on $x_1$ and $y$ is
\begin{equation}    
\begin{split}
    p(a_1,b|x_1,x_2=&1,y) =\\ &\tr\left[ \ketbra{a_1}{a_1}^\C \otimes \ketbra{\phi_{b|y}}{\phi_{b|y}}^\B \Phi_{\pr}^{\C\B}  \right].\\
\end{split}            \label{Equation::EffectiveMes}
\end{equation}
Notice, that setting $x_2=1$ allows for an effective measurement, $\{\ketbra{0}{0},\ketbra{1}{1}\}$, on the control system. On the other hand, from  Equation~\eqref{Eq::EffIdentity}, when $x_1=x_2=0$, the operations inside the switch act as an identity map on the control and target, allowing lab $\mathcal{C}$ to perform an explicit measurement on the control system. Concisely,  when $x_2=0$, lab $\mathcal{C}$ performs a binary outcome measurement on the control, and when $x_2=1$ an effective projective measurement on the rank 1 projectors of $\sigma_{\hat{Z}}$ is performed. The outcomes of both these measurements can be announced by lab $\mathcal{C}$ by setting $c \coloneqq x_2a_1+(x_2 \oplus 1)c'$, where $c'$ is the outcome of the measurement explicitly performed in lab $\mathcal{C}$. This is possible since $\mathcal{O}_{\mathcal{A}_i} \prec \mathcal{O}_{\mathcal{C}}$.  Finally, in lab $\mathcal{B}$, when $y=1$, a measurement should be chosen, such that with the measurement explicitly performed in lab $\mathcal{C}$, the probability of winning the CHSH game $b \oplus c = x_2 y$ is maximised. Upon optimising over measurements constructed from extremal effects, a maximum value of $(12+\sqrt{2})/16$ can be obtained when the measurement in lab $\mathcal{C}$ is formed from the rank 1 projectors of $(\sigma_{\hat{Z}}-\sigma_{\hat{X}})/\sqrt{2}$ and  that in lab $\mathcal{B}$ from the rank 1 projectors of $\sigma_{\hat{X}}$. This value is larger than that achievable using quantum correlations in the strategy mentioned in~\cite{vanderLugt2023}; hence a larger than quantum violation of Inequalities~\eqref{Eq::DRFInequality1} and~\eqref{Eq::DRFInequality2} is achievable. Note, that the value of the final term in Inequality~\eqref{Eq::DRFInequality3} is zero, since $p(a_2=0|x_1x_2=00)=0$. Therefore, its violation is equal to that of Inequality~\eqref{Eq::DRFInequality2}.

Non-extremal measurements do not lead to higher violations since they generate probability distributions that can be expressed as a convex combination of probability distributions generated by extremal measurements.

\subsection{Inequality~\eqref{Eq::DRFInequality4}}
\label{Subsection::DRFIneq4}

So far, the control system state space was $\mathcal{S}_{\C}$ and the the shared system with lab $\mathcal{B}$ was $\mathcal{S}_{\B}$ and the causal order was controlled by the states $\ketbra{0}{0}$ and $\ketbra{1}{1}$. To demonstrate a violation of Inequality~\eqref{Eq::DRFInequality4}, we will set the control system state space to be $\mathcal{S}_{\C'} \coloneqq \mathcal{S}_{\B}$ and the system held by lab $\mathcal{B}$ to be $\mathcal{S}_{\B'}=\mathcal{S}_{\C}$. Additionally, we use rank 1 projectors of $\sigma_{\hat{X}}$ i.e., $\ketbra{\pm}{\pm}$, to control the order of the operations $\{\mathcal{O}_{\mathcal{A}_i}\}_i$. The target system is initialised to the state $\ketbra{+}{+}$. 

In lab $\mathcal{A}_i$, the measurement $\{\ketbra{+}{+},\ketbra{-}{-}\}$ is performed on the incoming target system. The outcome is labelled $a_i$, where $a_i$ is 0 or 1 depending on whether the first or the second outcome is obtained. Next, the state  $\ketbra{+}{+}$ or $\ketbra{-}{-}$ is prepared depending on whether $x_i = 0$ or $1$ respectively, and sent off. In lab $\mathcal{C}$, the output control system is measured in the basis generated by the rank 1 projectors of  $\sigma_{\hat{Y}}$. In lab $\mathcal{B}$, the shared subsystem of $\Phi_{\pr}^{\C'\B'}$ is measured in the rank 1 projectors of $(\sigma_{\hat{X}} +  \sigma_{\hat{Y}})/\sqrt{2}$ when $y=0$ and of $(\sigma_{\hat{X}} - \sigma_{\hat{Y}})/\sqrt{2}$ when $y=1$.

With the setup described, one finds $p(a_1=0|x_1x_2=10)=p(a_2=0|x_1x_2=01)=1$ and $p(a_1a_2=00|x_1x_2=11)=0$. Therefore the sum of the first three terms is 1, its algebraic maximum. For the final term, when $x_1=x_2=1$, the marginal distribution over $a_1$ and $b$ is given by Equation~\eqref{Equation::EffectiveMes}, resulting in the effective measurement,  $\{\ketbra{+}{+},\ketbra{-}{-}\}$, of the control system; when $x_1=x_2=0$, the operations within the switch act as an identity on the control and target systems, allowing lab $\mathcal{C}$ to perform an explicit measurement on the control system. This can be verified by identifying that $\ket{\phi_{\pm}}=(\ket{++} \pm \ket{--})/\sqrt{2}$. Therefore, using the strategy described in the previous section, it is possible to perfectly win the CHSH game $(x_2a_1+(x_2 \oplus 1)c)\oplus b=x_2y$, since the combination of measurements used are precisely the ones that generate PR box correlations, when performed on $\Phi_{\pr}^{\C'\B'}$. Hence, the sum of the terms in Inequality~\eqref{Eq::DRFInequality4} evaluates to 2, its algebraic bound.

\section{Discussion}
\label{Section::Discussions}

We have shown that there exists maximally nonlocal theories that display maximal causal indefiniteness. In particular, it is possible to theory-independently certify indefinite causal order in a Hex-Square theory. In addition, if one were to take the violation of the DRF inequalities as a measure of indefinite causal order, in analogy to Bell inequalities and nonlocality, a larger than quantum violation of the DRF inequalities in a Hex-Square theory posits the presence of stronger than quantum causal indefiniteness. Here, reaching the algebraic limit of Inequality~\eqref{Eq::DRFInequality4} suggests that that quantum theory is neither the most nonlocal nor the most causally indefinite. This opens a new avenue to investigate the constraints on quantum correlations in their ability to display non-classical behaviours. To single out quantum theory from post-quantum GPTs, one might then want to device an information processing task whose optimal performance is reached using quantum correlations generated in an indefinite causal order. This might point towards a way towards possible axiomatisation of quantum theory.

We have shown that Inequalities~\eqref{Eq::DRFInequality},~\eqref{Eq::DRFInequality1},~\eqref{Eq::DRFInequality2} and~\eqref{Eq::DRFInequality3} are not violated up to their algebraic bounds by a Hex-Square theory. The terms of these inequalities that could not be maximised correspond to various CHSH games. If multiple copies of the state space were available, multi-copy nonlocality distillation is possible~\cite{PhysRevLett.102.120401}. Whether these CHSH games can then be perfectly won is not known to the author. In addition, the fact that a maximally nonlocal theory is needed to maximally violate Inequality~\eqref{Eq::DRFInequality4}, might point towards a connection between nonlocality and indefinite causality.

A notable outcome of our work is the presence of superposition in the absence of entanglement. A next natural question is whether entanglement implies superposition. For this, an operational definition of entanglement is needed.

\section{Acknowledgements}
I am grateful to Roger Colbeck for stimulating discussions and to Cyril Branciard, Tein van der Lugt and Vincenzo Fiorentino for useful feedback. This work was supported by the Departmental Studentship from the Department of Mathematics, University of York.

\onecolumngrid
\appendix

\section{Derivation of Hex State Space}
\label{App::DerivationExtendedHexSpace}

\begin{table}[h]
    \centering
    \begin{tabular}{|c|c|c|c|c|}
    \hline
         $ \phantom{\bigg\{}\frac{\mathds{1} + (\sigma_{\hat{X}} + \sigma_{\hat{Z}})/\sqrt{2} }{2} \phantom{\bigg\{}$  & $\phantom{\bigg\{} \frac{\mathds{1} - (\sigma_{\hat{X}} + \sigma_{\hat{Z}})/\sqrt{2} }{2} \phantom{\bigg\{} $  & $\phantom{\bigg\{} \frac{\mathds{1} + (\sigma_{\hat{X}} - \sigma_{\hat{Z}})/\sqrt{2} }{2} \phantom{\bigg\{}$  & $\phantom{\bigg\{} \frac{\mathds{1} - (\sigma_{\hat{X}} - \sigma_{\hat{Z}})/\sqrt{2} }{2} \phantom{\bigg\{}$  & $\frac{\mathds{1} + \sigma_{\hat{Z}} }{2}$  \\ \hline 
         $ \phantom{\bigg\{} \frac{1}{4} \left(\sqrt{2} r_x +\sqrt{2} r_z +2\right)  \phantom{\bigg\{}$ & $  \phantom{\bigg\{}\frac{1}{4} \left(-\sqrt{2} r_x -\sqrt{2} r_z +2\right)  \phantom{\bigg\{}$  & $ \phantom{\bigg\{} \frac{1}{4} \left(-\sqrt{2} r_x +\sqrt{2} r_z +2\right) \phantom{\bigg\{} $  & $ \phantom{\bigg\{} \frac{1}{4} \left(\sqrt{2} r_x-\sqrt{2} r_z+2\right)  \phantom{\bigg\{} $  &$ \phantom{\bigg\{} \frac{r_z +1}{2} \phantom{\bigg\{} $  \\ \hline \hline
           $  \phantom{\bigg\{} \frac{\mathds{1} + (\sigma_{\hat{X}} + \sigma_{\hat{Y}})/\sqrt{2} }{2} \phantom{\bigg\{} $  & $\phantom{\bigg\{} \frac{\mathds{1} - (\sigma_{\hat{X}} + \sigma_{\hat{Y}})/\sqrt{2} }{2} \phantom{\bigg\{}$  & $\phantom{\bigg\{} \frac{\mathds{1} + (\sigma_{\hat{X}} - \sigma_{\hat{Y}})/\sqrt{2} }{2} \phantom{\bigg\{}$  & $\phantom{\bigg\{} \frac{\mathds{1} - (\sigma_{\hat{X}} - \sigma_{\hat{Y}})/\sqrt{2} }{2} \phantom{\bigg\{}$  & $ \phantom{\bigg\{} \frac{\mathds{1} - \sigma_{\hat{Z}} }{2} \phantom{\bigg\{}$   \\ \hline 
           $ \phantom{\bigg\{} \frac{1}{4} \left(\sqrt{2} r_x +\sqrt{2} r_y +2\right)  \phantom{\bigg\{} $ & $ \phantom{\bigg\{} \frac{1}{4} \left(-\sqrt{2} r_x -\sqrt{2} r_y +2\right)  \phantom{\bigg\{}$  & $ \phantom{\bigg\{} \frac{1}{4} \left(\sqrt{2} r_x-\sqrt{2} r_y +2\right)  \phantom{\bigg\{}$  &  $ \phantom{\bigg\{} \frac{1}{4} \left(-\sqrt{2} r_x +\sqrt{2} r_y +2\right) \phantom{\bigg\{}$  &  $ \phantom{\bigg\{}\frac{1-r_z}{2} \phantom{\bigg\{}$  \\ \hline
    \end{tabular}
    \caption{Inner products between the extreme effects of $\mathcal{E}_{\mathrm{C}}$ and the hermitian matrix $\varrho$~\eqref{Eq::PauliState}. Top rows of each block list the extremal effects while the bottom rows list the respective inner products.}
    \label{Tab:InnerExtHex}
\end{table}

One needs that every inner product listed in the entries of the Table~\ref{Tab:InnerExtHex} is between 0 and 1. This puts constraints on the ranges of the variables $r_x,r_y$ and $r_z$. Using the Reduce function of MATHEMATICA, we obtain the reductions provided in the Table~\ref{Tab:ConstraintsExtHex} of these constraints. From the first two constraints, one gets the states 
$$
\left(
\begin{array}{cc}
 \frac{1}{2} & -\frac{1}{\sqrt{2}} \\
 -\frac{1}{\sqrt{2}} & \frac{1}{2} \\
\end{array}
\right), 
\left(
\begin{array}{cc}
 \frac{1}{2} & \frac{1}{\sqrt{2}} \\
 \frac{1}{\sqrt{2}} & \frac{1}{2} \\
\end{array}
\right).
$$

\begin{table}[h]
    \centering
    \begin{tabular}{|c|} \hline
          $\phantom{\bigg\{} r_x =-\sqrt{2}\land r_y =0\land r_z =0 \phantom{\bigg\{}$ \\ \hline
           $\phantom{\bigg\{} r_x =\sqrt{2}\land r_y =0\land r_z =0 \phantom{\bigg\{}$  \\ \hline
        $\phantom{\bigg\{} -\sqrt{2}<r_x\leq \frac{\sqrt{2}-2}{\sqrt{2}}\land \frac{-\sqrt{2} r_x-2}{\sqrt{2}}\leq r_y\leq \frac{\sqrt{2} r_x+2}{\sqrt{2}}\land \frac{-\sqrt{2}
   r_x-2}{\sqrt{2}}\leq r_z\leq \frac{\sqrt{2} r_x +2}{\sqrt{2}} \phantom{\bigg\{}$ \\ \hline
       $\phantom{\bigg\{} \frac{\sqrt{2}-2}{\sqrt{2}}<r_x\leq 0\land \frac{-\sqrt{2} r_x-2}{\sqrt{2}}\leq r_y\leq \frac{\sqrt{2} r_x+2}{\sqrt{2}}\land -1\leq r_z\leq 1 \phantom{\bigg\{}$  \\ \hline
       $\phantom{\bigg\{} 0<r_x\leq \frac{2-\sqrt{2}}{\sqrt{2}}\land \frac{\sqrt{2} r_x-2}{\sqrt{2}}\leq r_y\leq \frac{2-\sqrt{2} r_x}{\sqrt{2}}\land -1\leq r_z\leq 1 \phantom{\bigg\{}$  \\ \hline
        $\phantom{\bigg\{} \frac{2-\sqrt{2}}{\sqrt{2}}<r_x<\sqrt{2}\land \frac{\sqrt{2} r_x-2}{\sqrt{2}}\leq r_y\leq \frac{2-\sqrt{2} r_x}{\sqrt{2}}\land \frac{\sqrt{2}
   r_x-2}{\sqrt{2}}\leq r_z\leq \frac{2-\sqrt{2} r_x}{\sqrt{2}} \phantom{\bigg\{}$ \\ \hline      
    \end{tabular}
    \caption{Constraints on the variables $r_x,r_y$ and $r_z$ upon requiring every inner product listed in Table~\ref{Tab:InnerExtHex} to be in between 0 and 1.}
    \label{Tab:ConstraintsExtHex}
\end{table}
\noindent

From the third constraint onwards, one can set the inequalities to equalities, where possible and solve for the remaining states. We provide these states below, with $r=\sqrt{2}-1$:

$$
\left(
\begin{array}{cc}
 1 & \frac{1}{2} \left(r-i\right) \\
 \frac{1}{2} \left(r+i\right) & 0 \\
\end{array}
\right),
\left(
\begin{array}{cc}
 0 & \frac{1}{2} \left(r-i\right) \\
 \frac{1}{2} \left(r+i\right) & 1 \\
\end{array}
\right),\left(
\begin{array}{cc}
 1 & \frac{1}{2} \left(r+i\right) \\
 \frac{1}{2} \left(r-i\right) & 0 \\
\end{array}
\right),\left(
\begin{array}{cc}
 0 & \frac{1}{2} \left(r+i\right) \\
 \frac{1}{2} \left(r-i\right) & 1 \\
\end{array}
\right),
$$

$$
\left(
\begin{array}{cc}
 1 & \frac{1}{2} \left(-r-i\right) \\
 \frac{1}{2} \left(-r+i\right) & 0 \\
\end{array}
\right),
\left(
\begin{array}{cc}
 0 & \frac{1}{2} \left(-r-i\right) \\
 \frac{1}{2} \left(-r+i\right) & 1 \\
\end{array}
\right),\left(
\begin{array}{cc}
 1 & \frac{1}{2} \left(-r+i\right) \\
 \frac{1}{2} \left(-r-i\right) & 0 \\
\end{array}
\right),\left(
\begin{array}{cc}
 0 & \frac{1}{2} \left(-r+i\right) \\
 \frac{1}{2} \left(-r-i\right) & 1 \\
\end{array}
\right),
$$
$$
\left(
\begin{array}{cc}
 0 & \frac{1}{\sqrt{2}} i \\
 -\frac{1}{\sqrt{2}} i & 1 \\
\end{array}
\right),
\left(
\begin{array}{cc}
 1 & \frac{1}{\sqrt{2}} i \\
 -\frac{1}{\sqrt{2}} i & 0 \\
\end{array}
\right),
\left(
\begin{array}{cc}
 0 & -\frac{1}{\sqrt{2}} i \\
 \frac{1}{\sqrt{2}} i & 1 \\
\end{array}
\right),
\left(
\begin{array}{cc}
 1 & -\frac{1}{\sqrt{2}} i \\
 \frac{1}{\sqrt{2}} i & 0 \\
\end{array}
\right)
$$
The group of first 4 states can be written as $\frac{\mathds{1} \pm \sqrt{2}  \sigma_{\hat{X}}}{2}$. The next 8 states can be written as $\frac{\mathds{1}  \pm \sqrt{2}\sigma_{\hat{Y}} \pm \sigma_{\hat{Z}}}{2}$. The last 4 states can be written as $ \frac{\mathds{1} \pm r\sigma_{\hat{X}} \pm \sigma_{\hat{Y}} \pm \sigma_{\hat{Z}}}{2}$.

\end{document}